\newtheorem{theorem}{Theorem}
\newcolumntype{M}[1]{>{\raggedright\arraybackslash}m{#1}}
\newtheorem{corollary}{Corollary}
\newtheorem{remark}{Remark}
\def\BibTeX{{\rm B\kern-.05em{\sc i\kern-.025em b}\kern-.08em
    T\kern-.1667em\lower.7ex\hbox{E}\kern-.125emX}}
\pretocmd\@bibitem{\color{black}\csname keycolor#1\endcsname}{}{\fail}
\newcommand\citecolor[1]{\@namedef{keycolor#1}{ \color{red}}}
\begin{document}

 {\title{\huge  Uplink Rate Splitting Multiple Access with Imperfect Channel State Information and Interference Cancellation}}	
	\author{{Farjam Karim, \textit{Graduate Student Member}, \textit{IEEE},  Nurul Huda Mahmood, Arthur S. de Sena, \textit{Member}, \textit{IEEE}, \\Deepak Kumar, \textit{Member}, \textit{IEEE}, Bruno Clerckx, \textit{Fellow}, \textit{IEEE}, 
 Matti Latva-aho, \textit{Fellow}, \textit{IEEE}}\\
\thanks{\hrulefill}
		\thanks{F. Karim, N.H. Mahmood, A. S. de Sena, D. Kumar and M. Latva-aho are with Centre for Wireless Communications, University of Oulu, Finland. (e-mail: \{farjam.karim, nurulhuda.mahmood, arthur.sena, deepak.kumar,  matti.latva-aho\}@oulu.fi.)\\
        B. Clerckx is with the Department of Electrical and Electronic Engineering, Imperial College London, London SW7 2AZ, U.K. (e-mail: b.clerckx@imperial.ac.uk)
 } 
 \thanks{This work This work was supported by 6G Flagship (Grant Number 369116) funded by the Research Council of Finland, and Business Finland's 6GBridge-6CORE Project (Grant Number: 8410/31/2022).}	}

	\maketitle

\begin{abstract}
 This article investigates the performance of uplink rate splitting multiple access (RSMA) in a two-user scenario, addressing an under-explored domain compared to its downlink counterpart. With the increasing demand for uplink communication in applications like the Internet-of-Things, it is essential to account for practical imperfections, such as inaccuracies in channel state information at the receiver (CSIR) and limitations in successive interference cancellation (SIC), to provide realistic assessments of system performance. Specifically, we derive closed-form expressions for the outage probability, throughput, and asymptotic outage behavior of uplink users, considering imperfect CSIR and SIC. We validate the accuracy of these derived expressions using Monte Carlo simulations. Our findings reveal that at low transmit power levels, imperfect CSIR significantly affects system performance more severely than SIC imperfections. However, as  the transmit power increases, the impact of imperfect CSIR diminishes, while the influence of SIC imperfections becomes more pronounced.  Moreover, we highlight the impact of the rate allocation factor on user performance.  Finally, our comparison with non-orthogonal multiple access (NOMA) highlights the outage performance trade-offs between RSMA and NOMA. RSMA proves to be more effective in managing imperfect CSIR and enhances performance through strategic message splitting, resulting in more robust communication.
	\end{abstract}
 
\begin{IEEEkeywords}
Non-orthogonal multiple access , outage probability, rate splitting multiple access, throughput, imperfect CSIR. 
\end{IEEEkeywords}

\section{Introduction}

The sixth generation (6G) of wireless connectivity necessitates a re-evaluation of existing communication methodologies, with a focus on achieving much more efficient and effective resource allocation. Notably, emerging multiple access techniques such as power domain non-orthogonal multiple access (NOMA), space-division multiple access, and rate-splitting multiple access (RSMA) are central to these advancements~\cite{Clerckx_Proc_24}. In contrast to orthogonal multiple access methods like time-division multiple access and frequency-division multiple access, which allocate resources individually to users~\cite{Ghosh_jan_22_access}, RSMA offers innovative approaches to resource allocation and interference management, paving the way for the high performance required in 6G networks~\cite{Clerckx_Proc_24, yijie_comm_survey_22}.

 The concept of rate-splitting as a multiple access technique was initially proposed as early as 1996~\cite{Rimoldi_TII_96}. Despite the foundational significance of this early work, research in this area remains sparse.  More recently, in~\cite{Bruno_Jsac}, the authors demonstrated RSMA's potential as a next-generation technology, addressing interference management and multi-user communication challenges. This approach surpasses traditional methods such as orthogonal multiple access, NOMA, and space-division multiple access.
Although several studies on the analytical framework of uplink NOMA exist~\cite{zou_twc_noma_2019, jamali_may_20_twc}, there are only a handful of studies exploring the analytical aspects of uplink RSMA under Rayleigh fading assuming perfect channel state information (CSI) and perfect successive interference cancellation (SIC)~\cite{Liu_clercx_2024_TWC, Tegos_letter_mar_22,Jaiwei_tcom_2024,Zeng_TVT_2019}.  In~\cite{Jaiwei_tcom_2024}, the uplink multiple-input-multiple-output (MIMO) RSMA physical layer was studied in detail, with link-level simulations confirming RSMA's performance benefits over baseline schemes like NOMA. Although focused on a general MIMO setting, the study assumes perfect CSI and SIC. In~\cite{Zeng_TVT_2019}, the authors investigated Single-Input Multiple-Output NOMA for uplink Internet-of-Things communications, analyzing a rate-splitting scheme to ensure max-min fairness under a perfect scenario. These assumptions, however, are unrealistic for real-life deployment, where imperfections in CSI estimation and SIC accuracy significantly affect system performance. Recognizing this gap, our work analyzes the performance of uplink RSMA under Nakagami-$m$ fading, explicitly accounting for channel state information at the receiver (CSIR) and  SIC imperfections, where two uplink users communicate simultaneously with an access point (AP). By analyzing the impact of these imperfections on outage probability and throughput, our study provides realistic performance benchmarks for uplink RSMA. Moreover, Nakagami-$m$ fading is particularly relevant because it generalizes several common fading distributions, such as Rayleigh and Rician, allowing for a more flexible modeling of practical wireless communication scenarios. 

In this work, we provide novel insights into uplink RSMA by analyzing its performance under both perfect and imperfect CSIR and SIC conditions. First, we derive analytical expressions for outage probability, throughput, and asymptotic outage probability, considering both perfect and imperfect CSIR and SIC.
Next, we show that CSIR imperfections have a greater impact on user performance than SIC imperfections at low transmit power levels. However, as user transmit power increases, the impact of imperfect CSIR diminishes, while the influence of SIC imperfections becomes more pronounced highlighting the critical importance of incorporating these imperfections in system design. Additionally, we emphasize the significance of the rate allocation factor for system performance in uplink RSMA.
Lastly, we compare uplink RSMA with NOMA and demonstrate that RSMA outperforms NOMA, achieving lower outage floors, thus showcasing its superior ability to manage interference in practical scenarios.
These contributions emphasize the essential role of CSIR and rate allocation factor in uplink RSMA design and highlight RSMA’s advantage over NOMA in handling real-world imperfections.
\subsection{Notations} The symbols $f_{X}(x)$ and $F_{X}(x)$ respectively denote the probability density function (PDF) and cumulative distribution function (CDF) of a random variable $X$, $m$ and $\hat{\Omega}$ are shape and severity parameter of Nakagami-$m$ fading, $\mathcal{CN}(0,\sigma^2)$ represents complex Gaussian with zero mean and $\sigma^2$ as variance, $\binom{\cdot}{\cdot}$ denotes the binomial coefficient, and $\gamma(\cdot, \cdot)$ and $\Gamma(\cdot)$ are the lower incomplete gamma function and Gamma function.


\section{system model}
\label{sec:systemModel}
Consider an RSMA-aided uplink communication network, where two single-antenna users $U_1$ and $U_2$ transmit data symbols to a single-antenna-equipped AP{\footnote{The consideration of multiple antennas at the AP and users goes beyond the scope of this work. This possibility arises as a potential research direction.}}. All devices are assumed to operate in half-duplex mode. The channel gain between the user $U_i$, with $i\in\{1,2\}$, and the AP is given by $h_i = {\hat{h}}_{i}+{\Tilde{h}}_{ie}$, where $\hat{{h}}_{i}$ is the estimated channel gain and ${\Tilde{h}}_{ie}$ represents the channel estimation errors. The estimation errors are modeled as ${\Tilde{h}}_{ie}\sim\mathcal{CN}(0, \Omega_{h_{ie}})$ and $\hat{h}_{i}$ follows follows a Nakagami-$m$ distribution.  According to~\cite{ Farjam_WCNC_dubai_24}, the channel estimation error variance is defined as $\Omega_{h_{ie}} = \Omega_{{i}}/(1+\delta\rho_{i}\Omega_{{i}})$, where $\Omega_{i}$ is the variance of $h_i$, $\rho_{i}=P_{i}/\sigma^2$ denotes the signal-to-noise ratio (SNR) of the transmitted signal, and $\delta>0$ is the quality parameter for channel estimation. Therefore, the variance of $|\hat{{h}}_{i}|$ can be calculated as $\hat{\Omega}_i = \left({\Omega}_i - \Omega_{h_{ie}}\right)$. 

Following the principles of uplink RSMA, the boundary points of the capacity region are achieved when one of the user splits its message, as demonstrated in ~\cite{Liu_clercx_2024_TWC, Jiawei_clercx_Mar_CL23}. The determination of the user splitting its message and the corresponding power allocation is managed by the AP~\cite{Liu_clercx_2024_TWC}. Without loss of generality, we assume that $U_2$ has a superior channel condition and therefore splits its message $x_2$ into two parts namely, $x_{2,1}$ and $x_{2,2}$,  allocating $\alpha_{2,1}$ and $\alpha_{2,2}$ fraction of the $U_2$ transmit power $P_2$ to $x_{2,1}$ and $x_{2,2}$, respectively, where $\alpha_{2,1}+\alpha_{2,2}=1$. On the other hand, $U_1$ transmits its message $x_1$ with transmit power $P_1$. These three different messages are then encoded into three separate streams, $s_{2,1}$, $s_{2,2}$ and $s_1$, which are transmitted simultaneously to the AP.
Thus, the signal received at the AP can be expressed as 
\begin{align}
    \label{Received_signal}
    y = &\sqrt{P_{1}}s_1\left(\hat{h}_1+{\Tilde{h}}_{1e}\right)+\sqrt{P_2}\left(\hat{h}_2+{\Tilde{h}}_{2e}\right)\nonumber\\
&\times\left(\sqrt{\alpha_{2,1}}s_{2,1}+\sqrt{\alpha_{2,2}}s_{2,2}\right)+\eta,
\end{align}
where $\eta$ denotes additive white Gaussian noise (AWGN) with $\eta\sim\mathcal{CN}\left(0, \sigma^2\right)$.  We assume, without loss of generality, that $s_{2,1}$ is the first stream to be decoded. In~\cite{Rimoldi_TII_96}, it has been observed  that the decoding order  $s_{2,1} \to s_1 \to s_{2,2}$ is optimal for achieving all boundary points of the capacity region.
Moreover, if a given stream cannot be decoded successfully, the subsequent streams are unlikely to be decoded. Therefore, if one stream fails, the decoding process terminates. Consequently, to fully decode the message from $U_2$,  the AP must decode all three received streams using SIC.  For $U_1$, the AP needs to decode $s_{2,1}$ and $s_{1}$.
Therefore, the signal-to-interference-plus-noise ratio (SINR) for $U_2$ can be expressed as 
\begin{align}\label{U2_SINR_21}
    \gamma_{2,1}\!=\! \frac{|\hat{h}_2|^2\rho_2\alpha_{2,1}}{|\hat{h}_2|^2 \alpha_{2,2}\rho_2+|\hat{h}_1|^2 \rho_1+\rho_1\Omega_{h_{1e}}+\rho_2\Omega_{h_{2e}}+1}.
\end{align}
 Once, $s_{2,1}$ has been successfully decoded, the AP attempts to decode $s_{1}$ with a SINR of 
\begin{align}\label{U1_SINR}
    \gamma_{1}\!=\!\frac{|\hat{h}_1|^2\rho_1}{\left(|\hat{h}_2|^2+\Omega_{h_{2e}}\right)\rho_2\left( \alpha_{2,2}+\alpha_{2,1}\Xi_2\right)+\rho_1\Omega_{h_{1e}}\!\!+\!\!1}.
\end{align}
where $\Xi_{i}\in (0, 1)$ denotes the level of imperfection in SIC, with $\Xi_i=0$ for perfect SIC. 
After successfully decoding $s_{2,1}$ and $s_{1}$, the AP attempts to decodes the second stream of $U_2$ with a SINR of 
    \begin{align}\label{U2_SINR_22}
     \gamma_{2,2}\!=\! \frac{|\hat{h}_2|^2\rho_2\alpha_{2,2}}{|\hat{h}_2|^2\rho_2\alpha_{2,1}\Xi_2\!\!+\!\Omega_{h_{2e}} D_1\!\!+\!\left(\!|\hat{h}_1|^2\!\!+\!\!\Omega_{h_{1e}}\!\right) \!\rho_1 \Xi_1\!+\!1},
\end{align}
 where $D_1 \!=\!{\rho_2\left(\alpha_{2,2}+\alpha_{2,1}\Xi_2\right)}$.

\section{Performance Evaluation}
In this section, we analyze the performance of uplink RSMA aided system by deriving closed-form expressions for outage probability and throughput. As mentioned earlier that $\hat{h}_i$ is Nakagami-$m $ distributed, therefore, $|\hat{h}_i|^2$ follows a Gamma distribution whose PDF and CDF are given as~\cite{Farjam_WCNC_dubai_24}
 \begin{align}\label{pdf}
     f_{\left|\hat{h}_i\right|^2}(x)=\frac{m_i^{m_i} x^{m_i-1}}{\hat{\Omega}_i^{m_i} \Gamma\left(m_i\right)} \exp \left(-\frac{m_i}{\hat{\Omega}_i} x\right), \quad x\geq 0.\end{align}

    \begin{align}  \label{CDF} F_{\left|\hat{h}_i\right|^2}(x)= \frac{1}{\Gamma\left(m_i\right)} \gamma\left(m_i, \frac{ m_i }{\hat{\Omega}_i}x\right), \quad x\geq 0.
 \end{align} 
{\textit{ Outage Probability Analysis:}} Let $R_1$ and $R_2$ be the target rates for $U_1$ and $U_2$, respectively. Since $U_2$'s message is divided, the target rates for the two resulting streams $s_{2,1}$ and $s_{2, 2}$ are $R_{2,1} = \varphi R_2$ and $R_{2,2} = \left(1-\varphi\right) R_2$, respectively, where $0\leq\varphi\leq 1$  is the rate allocation factor~\cite{Jiawei_clercx_Mar_CL23}. Thus, the outage probability threshold for $U_1$ is given by $\gamma_{1}^{th} = 2^{R_1}-1$ where as for $U_2$'s split message can be given as $\gamma_{2,1}^{th} = 2^{R_{2,1}}-1$ and $\gamma_{2,2}^{th} = 2^{R_{2,2}}-1$.
The respective outage probability is analyzed in the following theorems:
\begin{theorem} The outage probability of $U_1$ considering  imperfect CSIR and SIC can be expressed as 
  \begin{align}\label{1st_out}
   P^{\text{Out}}_{1} = P^{\text{Out}}_{2,1}+(1-P^{\text{Out}}_{2,1})P^{\text{Out}}_{11},
 \end{align}  
 where $P^{\text{Out}}_{2,1}$ and $P^{\text{Out}}_{11}$ are given in \eqref{P_New_out21} and \eqref{P_out_11}, respectively,  $T_{2,1}\!\!=\!\! \frac{{m_2} \gamma_{2,1}^{th}}{\hat{\Omega}_{2}\rho_2\left(\alpha_{2,1}-\gamma_{2,1}^{th}\alpha_{2,2}\right)}$, $A_{2,1} \!\!=\!\! \left(\rho_1\Omega_{h_{1e}}\!\!\!+\!\!\rho_2\Omega_{h_{2e}}\!\!\!+\!\!1\right) $, $C_1 \!\!=\!\! \frac{m_1}{\hat{\Omega}_1}$,  $B_{2,1}\!=\!\rho_1$,  $G_1 \!=\!\frac{{m_1} \gamma_{1}^{th}}{\hat{\Omega}_{1}\rho_1}$, $D_1 \!=\!{\rho_2\left(\alpha_{2,2}+\alpha_{2,1}\Xi_2\right)}$, $D_2 = \frac{m_2}{\hat{\Omega}_2}$, and $A_{1} \!\!=\!\! \left(\rho_1\Omega_{h_{1e}}+\rho_2\Omega_{h_{2e}}\left(\alpha_{2,2}+\alpha_{2,1}\Xi_2\right)+1\right)$.
\end{theorem}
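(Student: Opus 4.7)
The plan is to decompose the outage event of $U_1$ according to the decoding chain $s_{2,1} \to s_1$ and then evaluate each piece using the PDF and CDF of $|\hat{h}_i|^2$ given in \eqref{pdf} and \eqref{CDF}. Since $U_1$ is correctly received only when the AP successfully decodes both $s_{2,1}$ and $s_1$, an outage occurs either when $s_{2,1}$ already fails or when $s_{2,1}$ succeeds and $s_1$ then fails. Writing this as a union of two disjoint events gives
\begin{align*}
P^{\text{Out}}_1 = P(\gamma_{2,1} < \gamma_{2,1}^{th}) + P(\gamma_{2,1} \geq \gamma_{2,1}^{th},\, \gamma_1 < \gamma_1^{th}),
\end{align*}
so identifying $P^{\text{Out}}_{2,1} = P(\gamma_{2,1} < \gamma_{2,1}^{th})$ and expressing the second summand as $(1-P^{\text{Out}}_{2,1})\,P^{\text{Out}}_{11}$, where $P^{\text{Out}}_{11}$ is the (conditional) failure probability of the $s_1$ stage, reproduces \eqref{1st_out}.

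For $P^{\text{Out}}_{2,1}$ I would rearrange the SINR constraint \eqref{U2_SINR_21} into a linear inequality in $|\hat{h}_2|^2$. When $\alpha_{2,1} \le \gamma_{2,1}^{th}\alpha_{2,2}$ the outage is trivially one, so I restrict to $\alpha_{2,1} > \gamma_{2,1}^{th}\alpha_{2,2}$; the condition then becomes $|\hat{h}_2|^2 < \frac{\gamma_{2,1}^{th}}{\rho_2(\alpha_{2,1}-\gamma_{2,1}^{th}\alpha_{2,2})}\bigl(|\hat{h}_1|^2 B_{2,1} + A_{2,1}\bigr)$. Conditioning on $|\hat{h}_1|^2 = x$ and invoking \eqref{CDF} gives the argument $T_{2,1}(x B_{2,1}+A_{2,1})$ inside $\gamma(m_2,\cdot)$. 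For integer $m_2$ I expand the lower incomplete gamma as a finite series, apply the binomial theorem to $(x B_{2,1}+A_{2,1})^k$, and note that the remaining $x$-integral against \eqref{pdf} is a standard Gamma integral equal to $\Gamma(m_1+j)/(T_{2,1}B_{2,1}+C_1)^{m_1+j}$, producing the finite double sum in \eqref{P_New_out21}.

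For $P^{\text{Out}}_{11}$, the SINR in \eqref{U1_SINR} rearranges to $|\hat{h}_1|^2 < \frac{\gamma_1^{th}}{\rho_1}\bigl(|\hat{h}_2|^2 D_1 + A_1\bigr)$, where $A_1$ collects the residual estimation-error and noise floor while $D_1$ aggregates the $|\hat{h}_2|^2$ coefficient including the SIC-imperfection term $\alpha_{2,1}\Xi_2$. Conditioning this time on $|\hat{h}_2|^2 = y$ and applying \eqref{CDF} to $|\hat{h}_1|^2$ at $G_1(y D_1 + A_1)$ yields the inner CDF; averaging over $|\hat{h}_2|^2$ with \eqref{pdf}, again using the finite series for $\gamma(m_1,\cdot)$ and the binomial theorem, reduces the problem to another finite family of Gamma integrals and delivers \eqref{P_out_11}.

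The main obstacle is the joint event in the second summand: $|\hat{h}_1|^2$ and $|\hat{h}_2|^2$ enter both $\gamma_{2,1}$ and $\gamma_1$, so the two failure events are \emph{not} marginally independent and the clean product form $(1-P^{\text{Out}}_{2,1})\,P^{\text{Out}}_{11}$ in \eqref{1st_out} requires interpreting $P^{\text{Out}}_{11}$ as the conditional probability $P(\gamma_1 < \gamma_1^{th}\mid \gamma_{2,1}\ge \gamma_{2,1}^{th})$. The cleanest route is to evaluate the joint probability directly as a double integral over the region carved out by the two linear-in-$|\hat{h}_i|^2$ constraints and to check that this yields exactly $(1-P^{\text{Out}}_{2,1})$ times the expression quoted in \eqref{P_out_11}; reconciling that two-dimensional integral with the quoted product structure, and keeping the $T_{2,1}$, $G_1$, $A_{2,1}$, $A_1$, $D_1$ bookkeeping consistent across the two stages, is the technical crux.
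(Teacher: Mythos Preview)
Your computation of $P^{\text{Out}}_{2,1}$ and $P^{\text{Out}}_{11}$ is exactly the paper's route in Appendix~A: rearrange each SINR inequality into a linear threshold on one $|\hat{h}_i|^2$, condition on the other, replace the regularized lower incomplete gamma by its finite Erlang series for integer shape, open $(yB+A)^p$ with the binomial theorem, and close the remaining $y$-integral via $\int_0^\infty y^{v-1}e^{-\mu y}\,dy=\Gamma(v)/\mu^v$ (the paper cites \cite[3.381.4]{2015249}).

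The one place you are more careful than the paper is the dependence issue you flag at the end. The paper does \emph{not} treat $P^{\text{Out}}_{11}$ as the conditional probability $\Pr(\gamma_1<\gamma_1^{th}\mid\gamma_{2,1}\ge\gamma_{2,1}^{th})$, nor does it evaluate the joint two-dimensional integral you describe. It simply defines $P^{\text{Out}}_{11}=\Pr(\gamma_1<\gamma_1^{th})$ as the \emph{marginal} probability---the integral in \eqref{Proof_step11} runs over all of $|\hat{h}_2|^2$ with no conditioning on the first stage---and then asserts the product form \eqref{1st_out} directly, with the match to Monte Carlo simulation serving as validation. So the ``technical crux'' you identify is not resolved in the paper; it is a tacit independence-type approximation. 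To reproduce the paper's result you should compute the two marginals exactly as you outline and accept \eqref{1st_out} as the combining rule, rather than attempting to reconcile the joint region with the product.
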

\begin{figure*}[t!]
    \begin{align}\label{P_New_out21}
        P^{\text{Out}}_{2,1}= 1\!-\!\!\sum\limits_{p=0}^{m_2-1}\sum\limits_{q=0}^{p}\binom{p}{q}\frac{\left(T_{2,1} B_{2,1}\right)^q \left(T_{2,1} A_{2,1}\right)^{(p-q)}\exp\left(-T_{2,1} A_{2,1}\right) C_1^{m_1} \Gamma\left({m_1}+q\right)}{\Gamma(m_1)(p!) \left(C_1+T_{2,1}B_{2,1}\right)^{m_1+q}}
    \end{align}\hrulefill

    \begin{align}\label{P_out_11}
        P^{\text{Out}}_{11}= 1\!-\!\!\sum\limits_{j=0}^{m_1-1}\sum\limits_{k=0}^{j}\binom{j}{k}\frac{\left(G_{1} D_{1}\right)^k \left(G_{1} A_{1}\right)^{(j-k)}\exp\left(-G_{1} A_1\right) D_2^{m_2} \Gamma\left({m_2}+k\right)}{\Gamma(m_2)(j!) \left(D_2+G_{1}D_{1}\right)^{m_2+k}}
    \end{align}\hrulefill

    \begin{align}\label{P_new_22}
        P^{\text{Out}}_{2,2}= 1\!-\!\!\sum\limits_{p=0}^{m_2-1}\sum\limits_{q=0}^{p}\binom{p}{q}\frac{\left(T_{2,2} B_{11}\right)^q \left(T_{2,2} A_{2,2}\right)^{(p-q)}\exp\left(-T_{2,2} A_{2,2}\right) C_1^{m_1} \Gamma\left({m_1}+q\right)}{\Gamma(m_1)(p!) \left(C_1+T_{2,2}B_{11}\right)^{m_1+q}}
    \end{align}\hrulefill
\end{figure*}
\begin{proof}
    Refer to Appendix A.
\end{proof}
\begin{remark}
    The outage probability of $U_1$ considering perfect CSIR and SIC can be obtained by substituting high $\delta$ values for evaluating $\Omega_{h_{ie}}\; \forall i \in \{1,2\}$ and $\Xi_2=0$.
\end{remark}
\begin{corollary}
The throughput of $U_1$ under imperfect CSIR and SIC can be evaluated as
\begin{align}\label{throughpu1_1st}
    \mathcal{T}_1 = \left(1-P^{\text{Out}}_1\right)R_1.
\end{align}
Moreover, by substituting the $P_{1}^{\text{out}}$ values obtained through Remark 1 in \eqref{throughpu1_1st}, $\mathcal{T}_1$ for perfect CSIR and SIC case can also be obtained.
\end{corollary}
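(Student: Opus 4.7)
The plan is to establish \eqref{throughpu1_1st} directly from the operational definition of throughput in a fixed-rate transmission scheme with outage. In such a system, user $U_1$ transmits at a fixed target rate $R_1$, and the receiver either decodes the message successfully or declares an outage; no retransmission model or variable-rate adaptation is invoked here. Hence the instantaneous delivered rate is the Bernoulli-type quantity $R_1 \cdot \mathbf{1}\{\text{no outage at } U_1\}$, whose expectation I would compute as the long-run throughput.

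First, I would recall that Theorem 1 already characterises the probability $P^{\text{Out}}_{1}$ that the AP fails to recover $s_1$, accounting for both the pre-decoding step of $s_{2,1}$ and the subsequent decoding of $s_1$ via SIC with residual interference governed by $\Xi_1,\Xi_2$ and estimation errors $\Omega_{h_{1e}},\Omega_{h_{2e}}$. Taking expectation of the indicator random variable with respect to the joint law of $(\hat h_1,\hat h_2,\tilde h_{1e},\tilde h_{2e})$ then gives
\begin{align*}
\mathcal{T}_1 \;=\; \mathbb{E}\!\left[R_1 \,\mathbf{1}\{\text{no outage at }U_1\}\right] \;=\; R_1\bigl(1-P^{\text{Out}}_{1}\bigr),
\end{align*}
which is precisely \eqref{throughpu1_1st}.

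For the second sentence of the corollary, I would simply note that the imperfect-CSIR/SIC derivation did not rely on any assumption about the magnitudes of $\delta$ or $\Xi_i$; the expressions in \eqref{P_New_out21}–\eqref{P_out_11} remain valid in the limit. Substituting $\Omega_{h_{ie}}\!\to\!0$ (obtained as $\delta\to\infty$) and $\Xi_2\!\to\!0$, as indicated by Remark 1, yields the perfect-CSIR-and-SIC value of $P^{\text{Out}}_{1}$, and inserting that value into \eqref{throughpu1_1st} gives the corresponding $\mathcal{T}_1$.

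Since this is essentially a definition-level statement, I do not expect a significant obstacle. The only subtlety worth flagging is to be explicit that the throughput convention adopted here is the classical fixed-rate one, so that the argument reduces to evaluating the mean of an indicator; any alternative convention (e.g.\ Shannon-rate averaging over fading) would give a different formula, and it is important to justify why the chosen convention is appropriate for the RSMA outage-based setting analysed in Theorem 1.
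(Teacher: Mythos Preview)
Your proposal is correct and matches the paper's treatment: the corollary is stated without proof, relying implicitly on the standard fixed-rate throughput definition $\mathcal{T}_1 = R_1\Pr\{\text{no outage}\}$, which is exactly the Bernoulli-indicator expectation you compute. The second part likewise follows immediately by specialising the parameters as in Remark~1, just as you describe.
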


\begin{theorem} All three messages need to successfully decoded to decode user 2's message. Hence, the outage probability of $U_2$ considering  imperfect CSIR and imperfect SIC can be expressed as 
  \begin{align}\label{ou_22}
    P^{\text{Out}}_{2} &= 
   P^{\text{Out}}_{2,1}+(1-P^{\text{Out}}_{2,1})P^{\text{Out}}_{11}\nonumber\\
   &+(1-P^{\text{Out}}_{2,1})(1-P^{\text{Out}}_{11})P^{\text{Out}}_{2,2},
\end{align}
 where $P^{\text{Out}}_{2,2}$ is given in  \eqref{P_new_22}, $T_{2,2}\!\!=\!\! \frac{{m_2} \gamma_{2,2}^{th}}{\hat{\Omega}_{2}\rho_2\left(\alpha_{2,2}-\gamma_{2,2}^{th}\alpha_{2,1}\Xi_2\right)}$, $A_{2,2} = \left(\rho_1\Omega_{h_{1e}}\Xi_1 + \rho_2\Omega_{h_{2e}}\left(\alpha_{2,2}+\alpha_{2,1}\Xi_2\right)+1 \right)$ and $B_{11}=\rho_1\Xi_1$.
\end{theorem}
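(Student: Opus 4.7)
The plan is to split the argument into two distinct pieces: a combinatorial decomposition that reduces $P_2^{\text{Out}}$ to the three conditional outage events associated with the SIC stages, and an analytic derivation of the remaining ingredient $P_{2,2}^{\text{Out}}$ in closed form. Since the prescribed decoding order is $s_{2,1}\to s_1\to s_{2,2}$ and the decoder aborts on the first failure, $U_2$'s message is recovered if and only if all three streams are decoded, i.e.\ the success event is the intersection of the three per-stage success events. First I would write $1-P_2^{\text{Out}}$ as this triple intersection, apply the chain rule of probability to factor it into $(1-P^{\text{Out}}_{2,1})(1-P^{\text{Out}}_{11})(1-P^{\text{Out}}_{2,2})$, and then expand $1$ minus this product into the three mutually exclusive failure events shown in \eqref{ou_22}. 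The first two factors coincide with the quantities already obtained in the proof of Theorem~1, so the only genuinely new calculation is $P_{2,2}^{\text{Out}}=\Pr(\gamma_{2,2}<\gamma_{2,2}^{th}\mid s_{2,1},s_1 \text{ decoded})$.

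For the analytic part I would start from \eqref{U2_SINR_22}, move the denominator across the inequality $\gamma_{2,2}<\gamma_{2,2}^{th}$, and collect the coefficients of $|\hat{h}_2|^2$ on one side. Provided $\alpha_{2,2}-\gamma_{2,2}^{th}\alpha_{2,1}\Xi_2>0$ (otherwise the outage is trivially one, which would need to be flagged as a side condition), this yields
\begin{equation*}
|\hat{h}_2|^2 \;<\; \tfrac{\hat{\Omega}_2}{m_2}\,T_{2,2}\bigl(A_{2,2} + B_{11}\,|\hat{h}_1|^2\bigr),
\end{equation*}
using the definitions of $T_{2,2}$, $A_{2,2}=\Omega_{h_{2e}}D_1+\rho_1\Omega_{h_{1e}}\Xi_1+1$, and $B_{11}=\rho_1\Xi_1$ given in the statement. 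Conditioning on $|\hat{h}_1|^2$ and evaluating the Gamma CDF \eqref{CDF} at the bound reduces the problem to an expectation over $|\hat{h}_1|^2$.

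Next I would exploit the fact that, for integer $m_2$, the lower incomplete gamma admits the finite-sum representation
\begin{equation*}
\gamma(m_2,z) = (m_2-1)!\Bigl[1 - e^{-z}\sum_{p=0}^{m_2-1}\tfrac{z^p}{p!}\Bigr].
\end{equation*}
Substituting $z=T_{2,2}(A_{2,2}+B_{11}|\hat{h}_1|^2)$, applying the binomial theorem to $(A_{2,2}+B_{11}|\hat{h}_1|^2)^p$, and interchanging summation with the expectation over $|\hat{h}_1|^2$ leaves a sum of integrals of the form $\int_0^\infty x^{m_1+q-1}e^{-(C_1+T_{2,2}B_{11})x}\,dx$ against the Gamma PDF in \eqref{pdf}. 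These are standard Gamma integrals evaluating to $\Gamma(m_1+q)/(C_1+T_{2,2}B_{11})^{m_1+q}$, and assembling the coefficients directly produces \eqref{P_new_22}.

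The bookkeeping of the binomial/Gamma expansion is tedious but mechanical; the genuine obstacle is really the first step of the analytic part, namely isolating $|\hat{h}_2|^2$ cleanly when the residual self-interference term $|\hat{h}_2|^2\rho_2\alpha_{2,1}\Xi_2$ appears in the denominator of $\gamma_{2,2}$. This requires the positivity condition on $\alpha_{2,2}-\gamma_{2,2}^{th}\alpha_{2,1}\Xi_2$, which I would state as a regularity assumption before proceeding; outside that regime the outage is identically one and the expression \eqref{P_new_22} is interpreted accordingly.
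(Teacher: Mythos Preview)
Your proposal is correct and follows essentially the same route as the paper, which simply refers back to Appendix~A: rearrange the SINR inequality to isolate $|\hat{h}_2|^2$ (under the implicit positivity condition you rightly make explicit), condition on $|\hat{h}_1|^2$, expand the Gamma CDF via the Erlang finite sum for integer $m_2$, apply the binomial theorem, and close with the standard Gamma integral. The only minor remark is that your invocation of the ``chain rule'' to obtain the product $(1-P^{\text{Out}}_{2,1})(1-P^{\text{Out}}_{11})(1-P^{\text{Out}}_{2,2})$ tacitly treats the three stage-wise success events as independent (or, equivalently, replaces the conditional stage probabilities by their marginals), which is exactly the decomposition the paper adopts in \eqref{ou_22} without further comment.
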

\begin{proof}
The proof can be obtained by following a similar approach as in Appendix A.\end{proof}
\begin{remark}
    The outage probability of $U_2$ considering perfect CSIR and perfect SIC of \eqref{U2_SINR_21} can be obtained by substituting  high $\delta$ values for evaluating $\Omega_{h_{ie}}\; \forall i \in \{1,2\}$ and $\Xi_2=0$ in \eqref{ou_22}.
\end{remark}
\begin{remark}
    The outage probability of $U_2$ considering imperfect SIC of \eqref{U2_SINR_21} and perfect SIC of \eqref{U1_SINR} with perfect CSIR can be obtained by substituting  $P^{\text{Out}}_{2,2}=\frac{1}{\Gamma(m_2)}\gamma\left(m_2, \frac{m_2{\gamma^{th}_{2,2}}}{\hat{\Omega}_{2}\rho_2\left(\alpha_{22}-{\gamma^{th}_{2,2}}\alpha_{21}\Xi_2\right)}\right)$ and high $\delta$ values for evaluating $\Omega_{h_{ie}}\; \forall i \in \{1,2\}$ in \eqref{ou_22}. Additionally, by setting $\Xi_2=0$, $P^{\text{Out}}_{2}$ for the perfect SIC case can also be obtained.
\end{remark}
\begin{figure*}[t]
    \centering
    \begin{minipage}[b]{0.48\textwidth}
        \centering
        \includegraphics[width=\textwidth]{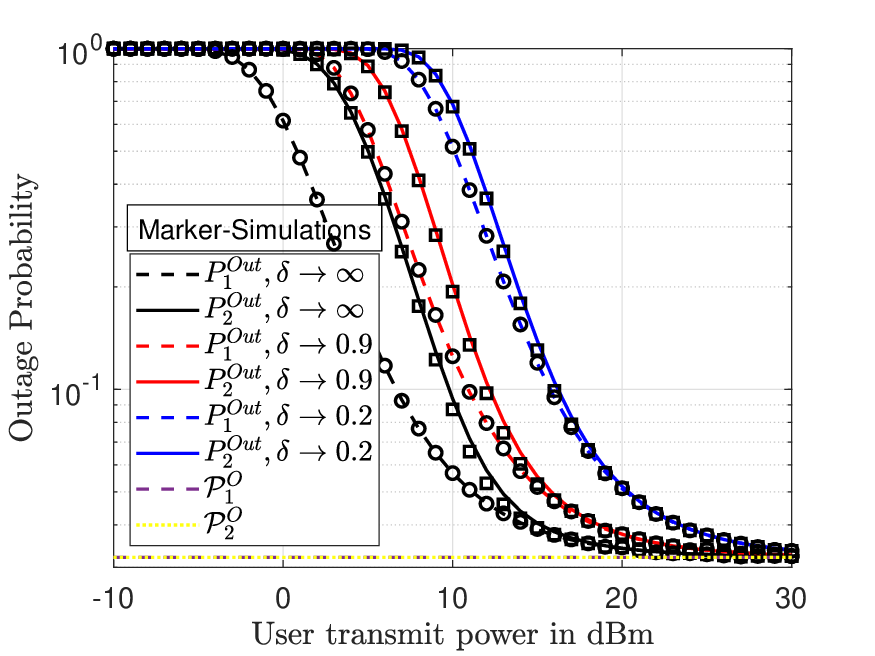}
        \caption{$\delta$ Effect on Outage vs SNR.}
        \label{fig:RSMA_ICSI}
    \end{minipage}
    \begin{minipage}[b]{0.48\textwidth}
        \centering
        \includegraphics[width=\textwidth]{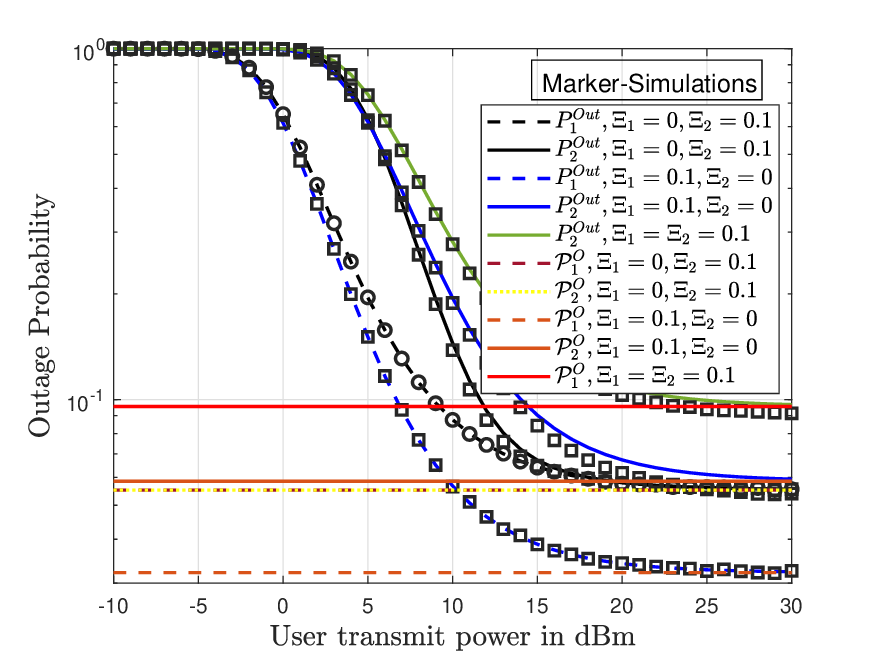}
        \caption{$\Xi_{i}$ Effect on Outage vs SNR.}
        \label{fig:RSMA_SIC}
    \end{minipage}
\end{figure*}


\begin{corollary}
The throughput of $U_2$  with imperfect CSIR and imperfect SIC can be evaluated as
\begin{align}\label{throughpu1_2nd}
    \mathcal{T}_2 = \left(1-P^{\text{Out}}_2\right)\left(R_{2,1}+R_{2,2}\right).
\end{align}
Additionally, by substituting $\Xi_2=0$ in $P_{2}^{\text{Out}}$ values obtained through Remark 3 in \eqref{throughpu1_2nd}, $\mathcal{T}_2$ for perfect CSIR and perfect SIC  can  be obtained.
\end{corollary}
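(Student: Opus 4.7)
The plan is to argue this corollary as a direct consequence of the definition of throughput combined with Theorem~2. Throughput measures the average number of information bits per channel use that are \emph{successfully} delivered to the intended receiver, which I would express as the product of the attempted transmission rate and the probability that the transmission is correctly decoded. Since $U_2$ splits its message $x_2$ into $x_{2,1}$ and $x_{2,2}$ at target rates $R_{2,1}=\varphi R_2$ and $R_{2,2}=(1-\varphi)R_2$ respectively, the total information rate that $U_2$ is attempting to push through to the AP in each channel use is $R_{2,1}+R_{2,2}$.

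Next I would emphasize the \emph{success event} for $U_2$. Recall from Theorem~2 that $U_2$'s message is considered correctly recovered only when all three streams $s_{2,1}$, $s_1$, and $s_{2,2}$ are successively decoded without failure, and the outage probability $P^{\text{Out}}_2$ in \eqref{ou_22} was derived precisely as the probability that \emph{at least one} of these three decoding stages fails. Hence the complement $1-P^{\text{Out}}_2$ is exactly the probability that $U_2$'s full message is recovered. Multiplying this success probability by the attempted rate $R_{2,1}+R_{2,2}$ immediately yields \eqref{throughpu1_2nd}. No new random quantities or integrals need to be introduced; the work was already absorbed into Theorem~2.

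For the second part of the corollary, the plan is simply to observe that the perfect CSIR and perfect SIC regime is a parameter specialization of the same formula. Setting $\Xi_2=0$ eliminates the residual interference term in the SINR of $s_{2,2}$ given in \eqref{U2_SINR_22}, and as noted in Remark~3, taking $\delta$ large drives $\Omega_{h_{ie}}\to 0$ for $i\in\{1,2\}$, removing the error-variance contributions in \eqref{U2_SINR_21}--\eqref{U2_SINR_22}. Substituting the resulting $P^{\text{Out}}_2$ expression into \eqref{throughpu1_2nd} recovers the ideal-case throughput, with the outer envelope $R_{2,1}+R_{2,2}$ unchanged because the rate allocation is an encoder-side design choice independent of channel imperfections.

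The only mild subtlety, and the step I would be most careful about when writing up, is justifying that the three failure events in $P^{\text{Out}}_2$ are combined \emph{consistently} with the chained decoding order $s_{2,1}\to s_1\to s_{2,2}$: specifically, that the decomposition in \eqref{ou_22} truly captures the event ``$U_2$'s message is not fully recovered'' without double-counting the conditional terms $(1-P^{\text{Out}}_{2,1})P^{\text{Out}}_{11}$ and $(1-P^{\text{Out}}_{2,1})(1-P^{\text{Out}}_{11})P^{\text{Out}}_{2,2}$. Once that partition is accepted from Theorem~2, the corollary itself requires no further probabilistic argument beyond the multiplication rule.
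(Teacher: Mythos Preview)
Your proposal is correct and mirrors the paper's treatment: the corollary is stated without a separate proof because it is an immediate consequence of the standard throughput definition $\mathcal{T}=(1-P^{\text{Out}})R$ applied to $U_2$'s total target rate $R_{2,1}+R_{2,2}$, with $P^{\text{Out}}_2$ supplied by Theorem~2 and the perfect-CSIR/SIC specialization obtained via Remark~3. There is nothing further to add.
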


\begin{theorem}
    The asymptotic outage probability of $U_1$ in the high SNR regime under imperfect SIC can be expressed as 
    \begin{align}\label{Asymp_11}
       {\mathcal{P}}_{1}^{O}\approx \mathcal{P}^O_{2,1}+(1-\mathcal{P}^O_{2,1})\mathcal{P}^O_{11},
    \end{align}
    where $\mathcal{P}^O_{2,1}=1-\sum\limits_{p=0}^{m_2-1}\frac{{C_1^{m_1}}{\left({\mathbb{T}_{2,1}}\right)^p}{\Gamma(m_1+p)}}{(p!){\Gamma(m_1)}{{\left(C_1+{\mathbb{T}_{2,1}}\right)}^{m_1+p}}}$, $\mathcal{P}^O_{11}\!\!=1-\sum\limits_{j=0}^{m_1-1}\frac{{D_2^{m_2}}{\left({\mathbb{G}_{1}}\right)^j}{\Gamma(m_2+j)}}{(j!){\Gamma(m_2)}{{\left(D_2+{\mathbb{G}_{1}}\right)}^{m_2+j}}}$,  $\mathbb{T}_{2,1}= \frac{{m_2} \gamma_{2,1}^{th}}{\hat{\Omega}_{2}\left(\alpha_{2,1}-\gamma_{2,1}^{th}\alpha_{2,2}\right)}$, and  $\mathbb{G}_{1}= \frac{{m_1} \gamma_{1}^{th}\left(\alpha_{2,2}+\alpha_{2,1}\Xi_2\right)}{\hat{\Omega}_{1}}$.
\end{theorem}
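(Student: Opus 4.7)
The plan is to start from the exact outage expression \eqref{1st_out} and propagate a high-SNR limit through it term by term, under the assumption of perfect CSIR (so $\Omega_{h_{1e}},\Omega_{h_{2e}}\to 0$, which is consistent with the absence of estimation-error variances in $\mathbb{T}_{2,1}$ and $\mathbb{G}_1$) while SIC remains imperfect via $\Xi_2$. Because the outer recursion $P^{\text{Out}}_1 = P^{\text{Out}}_{2,1} + (1-P^{\text{Out}}_{2,1})P^{\text{Out}}_{11}$ has the same algebraic form as the asymptotic target \eqref{Asymp_11}, the task reduces to independently asymptoting $P^{\text{Out}}_{2,1}$ from \eqref{P_New_out21} and $P^{\text{Out}}_{11}$ from \eqref{P_out_11} and then substituting back.

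For $P^{\text{Out}}_{2,1}$, I would first note that perfect CSIR reduces $A_{2,1}=\rho_1\Omega_{h_{1e}}+\rho_2\Omega_{h_{2e}}+1$ to $1$, while $T_{2,1}\propto 1/\rho_2$ vanishes as $\rho_2\to\infty$. The product $T_{2,1}A_{2,1}$ therefore tends to $0$, which both drives $\exp(-T_{2,1}A_{2,1})\to 1$ and annihilates every term of the inner binomial sum carrying a positive power $(T_{2,1}A_{2,1})^{p-q}$, leaving only the diagonal $q=p$ contributions. In parallel, with the usual equal-SNR convention $\rho_1=\rho_2=\rho$, the other product $T_{2,1}B_{2,1}=T_{2,1}\rho_1$ converges exactly to $\mathbb{T}_{2,1}=m_2\gamma_{2,1}^{th}/[\hat{\Omega}_2(\alpha_{2,1}-\gamma_{2,1}^{th}\alpha_{2,2})]$. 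Substituting these reductions into \eqref{P_New_out21} yields the stated form of $\mathcal{P}^O_{2,1}$.

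The argument for $P^{\text{Out}}_{11}$ is structurally identical. Under perfect CSIR $A_1\to 1$, and $G_1\propto 1/\rho_1\to 0$, so $G_1 A_1\to 0$, which collapses the inner sum to its $k=j$ diagonal and the exponential factor to unity. The surviving product $G_1 D_1 = \frac{m_1\gamma_1^{th}}{\hat{\Omega}_1\rho_1}\rho_2(\alpha_{2,2}+\alpha_{2,1}\Xi_2)$ becomes $\mathbb{G}_1=m_1\gamma_1^{th}(\alpha_{2,2}+\alpha_{2,1}\Xi_2)/\hat{\Omega}_1$ when $\rho_1=\rho_2$, which reproduces the stated $\mathcal{P}^O_{11}$. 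Substituting the two asymptotic pieces back into \eqref{1st_out} gives \eqref{Asymp_11}.

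The main subtlety to handle carefully is the joint asymptotic behaviour of the two vanishing quantities that sit inside each binomial sum: one must verify that while $T_{2,1}A_{2,1}$ (respectively $G_1 A_1$) goes to zero, the companion product $T_{2,1}B_{2,1}$ (respectively $G_1 D_1$) simultaneously stabilises at the finite nonzero constants $\mathbb{T}_{2,1}$ and $\mathbb{G}_1$. This requires the ratio $\rho_1/\rho_2$ to remain bounded away from $0$ and $\infty$, which is immediate under the standard equal-SNR scaling but needs to be stated explicitly; otherwise the diversity order read off from \eqref{Asymp_11} could be misidentified. The remaining steps are routine bookkeeping of binomial coefficients and should not present any further obstacle.
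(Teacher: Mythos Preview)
Your argument is correct and rests on the same two observations the paper uses: under the equal-SNR scaling $\rho_1=\rho_2\to\infty$ (with perfect CSIR so that $A_{2,1},A_1$ stay bounded), the products $T_{2,1}A_{2,1}$ and $G_1A_1$ vanish while $T_{2,1}B_{2,1}\to\mathbb{T}_{2,1}$ and $G_1D_1\to\mathbb{G}_1$ stabilise. The only methodological difference is \emph{where} the limit is taken. The paper returns to the probability level \eqref{Proof_U21}, applies L'H\^opital's rule to the CDF argument to obtain the interference-limited threshold $\gamma_{2,1}^{th}|\hat{h}_1|^2/(\alpha_{2,1}-\gamma_{2,1}^{th}\alpha_{2,2})$, and then re-evaluates the resulting single integral \eqref{Erlang_asymptotic}; you instead pass the limit directly through the already-computed double sums \eqref{P_New_out21}--\eqref{P_out_11}, collapsing the inner binomial to its $q=p$ (respectively $k=j$) diagonal. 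Your route is slightly more economical since it reuses Theorem~1 verbatim, whereas the paper's route makes the physical interpretation (noise washing out, leaving only the inter-user interference ratio) more transparent. Either way the bookkeeping is identical, and your explicit caveat on the boundedness of $\rho_1/\rho_2$ is a useful addition that the paper leaves implicit.
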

\begin{proof}
    Please refer to Appendix B.
\end{proof}
\begin{remark}
    The asymptotic outage probability of $U_1$ with perfect SIC can be obtained by substituting $\Xi_2=0$ in \eqref{Asymp_11}.
\end{remark}
\begin{theorem}
    The asymptotic outage probability of $U_2$ in the high SNR regime under imperfect SIC can be expressed as 
    \begin{align}\label{Asymp_2}
       {\mathcal{P}}_{2}^{O}&\approx \mathcal{P}^O_{2,1}+(1-\mathcal{P}^O_{2,1})\mathcal{P}^O_{11}\nonumber\\
       &+(1-\mathcal{P}^O_{2,1})(1-\mathcal{P}^O_{11})\mathcal{P}^O_{2,2},
    \end{align}
    where $\mathcal{P}^O_{2,2}=1-\sum\limits_{p=0}^{m_2-1}\frac{{C_1^{m_1}}{\left({\mathbb{T}_{2,2}}\right)^p}{\Gamma(m_1+p)}}{(p!){\Gamma(m_1)}{{\left(C_1+{\mathbb{T}_{2,2}}\right)}^{m_1+p}}}$ and $\mathbb{T}_{2,2}= \frac{{m_2} \gamma_{2,2}^{th}\Xi_1}{\hat{\Omega}_{2}\left(\alpha_{2,2}-\gamma_{2,2}^{th}\Xi_2\alpha_{2,1}\right)}$.
    \end{theorem}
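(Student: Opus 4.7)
The plan is to reuse the decomposition in \eqref{ou_22} together with the asymptotic machinery already established in Appendix~B for Theorem~2, and then supply the one additional high-SNR limit that is specific to Theorem~3, namely $\mathcal{P}^{O}_{2,2}$. Since Theorem~3 isolates the effect of imperfect SIC, I first specialize to perfect CSIR by letting $\Omega_{h_{1e}},\Omega_{h_{2e}}\to 0$, which collapses $A_{2,2}$ to $1$ and reduces $B_{11}$ to $\rho_{1}\Xi_{1}$, while leaving the structural form of \eqref{P_new_22} unchanged.

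Having done that, the first two summands of \eqref{ou_22} are identical in form to the two summands of \eqref{1st_out}, so their high-SNR limits are exactly the expressions $\mathcal{P}^{O}_{2,1}$ and $\mathcal{P}^{O}_{11}$ produced by Theorem~2. I would therefore only need to derive $\mathcal{P}^{O}_{2,2}$ as the limit of \eqref{P_new_22}, then substitute all three asymptotic pieces into \eqref{ou_22} to assemble \eqref{Asymp_2}.

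The derivation of $\mathcal{P}^{O}_{2,2}$ itself follows the same three observations used in Appendix~B. First, $T_{2,2}=\mathcal{O}(1/\rho_{2})$, so $T_{2,2}A_{2,2}\to 0$ as $\rho_{2}\to\infty$, which forces $\exp(-T_{2,2}A_{2,2})\to 1$ and $(T_{2,2}A_{2,2})^{p-q}\to 0$ for every $p-q>0$; hence only the $q=p$ diagonal survives in the double sum of \eqref{P_new_22}. Second, taking a common SNR $\rho=\rho_{1}=\rho_{2}$ in the high-SNR regime gives $T_{2,2}B_{11}\to \mathbb{T}_{2,2}$, matching the definition in the statement. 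Third, $C_{1}=m_{1}/\hat{\Omega}_{1}$ remains finite and SNR-independent, so the $q=p$ terms aggregate into the stated closed form for $\mathcal{P}^{O}_{2,2}$.

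The main obstacle is the bookkeeping in step one: one must verify that the $p-q>0$ terms of \eqref{P_new_22} vanish uniformly in $p,q$ and that the partial cancellation between the decaying factor $(T_{2,2}A_{2,2})^{p-q}$ and the growing denominator $(C_{1}+T_{2,2}B_{11})^{m_{1}+q}$ really leaves a finite, nonzero limit only along the diagonal. Once this is established, the remaining steps are routine algebra essentially copied from the $\mathcal{P}^{O}_{2,1}$ calculation of Theorem~2, and the assembly into \eqref{Asymp_2} is immediate from \eqref{ou_22}.
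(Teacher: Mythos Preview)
Your proposal is correct and follows essentially the same route as the paper, which simply says the proof mimics Appendix~B: reuse the decomposition \eqref{ou_22}, borrow $\mathcal{P}^{O}_{2,1}$ and $\mathcal{P}^{O}_{11}$ from the $U_{1}$ asymptotic theorem, and derive the extra piece $\mathcal{P}^{O}_{2,2}$ under $\rho_{1}=\rho_{2}\to\infty$ with perfect CSIR. The only cosmetic difference is that the paper takes the high-SNR limit \emph{before} evaluating the integral (going back to \eqref{Proof_U21} and simplifying the argument of $F_{X}$), whereas you take the limit \emph{after}, directly in the closed form \eqref{P_new_22}; note in that regard that $(C_{1}+T_{2,2}B_{11})$ does not grow but converges to the finite constant $C_{1}+\mathbb{T}_{2,2}$, so no cancellation is needed and the off-diagonal terms vanish outright.
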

    \begin{proof}
The proof can be obtained by following a similar approach as in Appendix B.\end{proof}
    \begin{remark}
    The asymptotic outage probability of $U_2$ considering  perfect SIC  can be obtained by substituting $\mathcal{P}^O_{2,2} =\frac{{\mathfrak{T}_{2,2}}^{m_2}}{\Gamma(m_2)\; m_2}$ in \eqref{Asymp_2}, where $\mathfrak{T}_{2,2}=\frac{m_2\gamma^{th}_{2,2}}{\hat{\Omega}_2\rho_2\alpha_{2,2}}$.
\end{remark}
\begin{proof}
    Please refer to Appendix C for Remark 5.
\end{proof}

 
\section{Results and Discussions}
In this section, we present Monte Carlo simulations to verify the accuracy of the derived closed-form expressions. The simulations used a distance-dependent path-loss model between $U_{i}$ and AP  given by  $\varphi/d_{i}^{\tau}$, where $i\in \{1,2\}, \varphi=1$ metre represents the reference path-loss value, $d_{i}$ is the distance between $U_i$ and AP, and $\tau=3.8$ is the path-loss exponent. 
Unless specified otherwise, all parameters for the simulations are given in Table~\ref{t3}. 
\begin{figure*}[t]
    \centering
    \begin{minipage}[b]{0.32\textwidth}
        \centering
        \includegraphics[width=\textwidth]{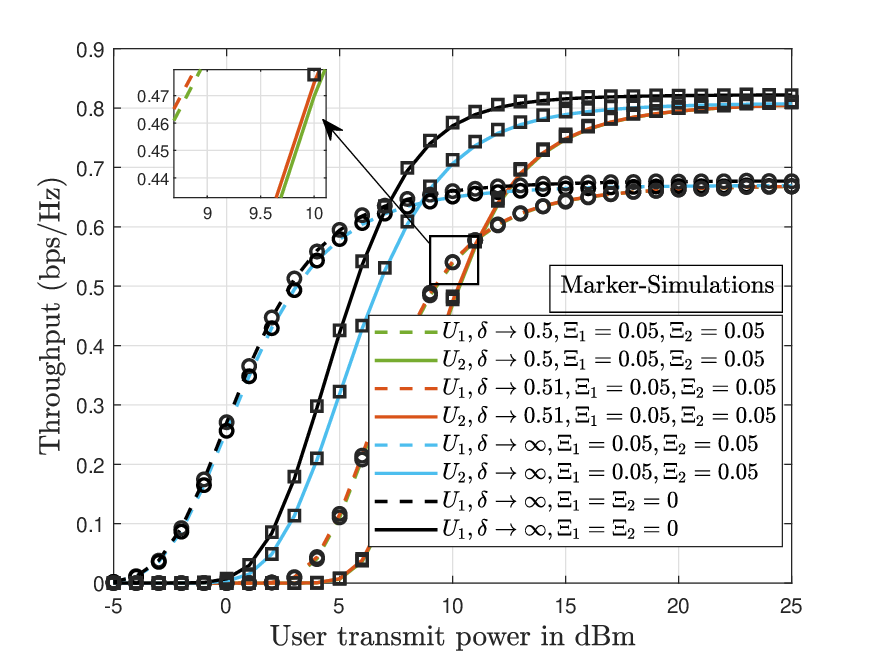}
        \caption{$\delta$ and $\Xi_{i}$ Effect on Throughput.}
        \label{fig:throughput}
    \end{minipage}
     \begin{minipage}[b]{0.32\textwidth}
        \centering
        \includegraphics[width=\textwidth]{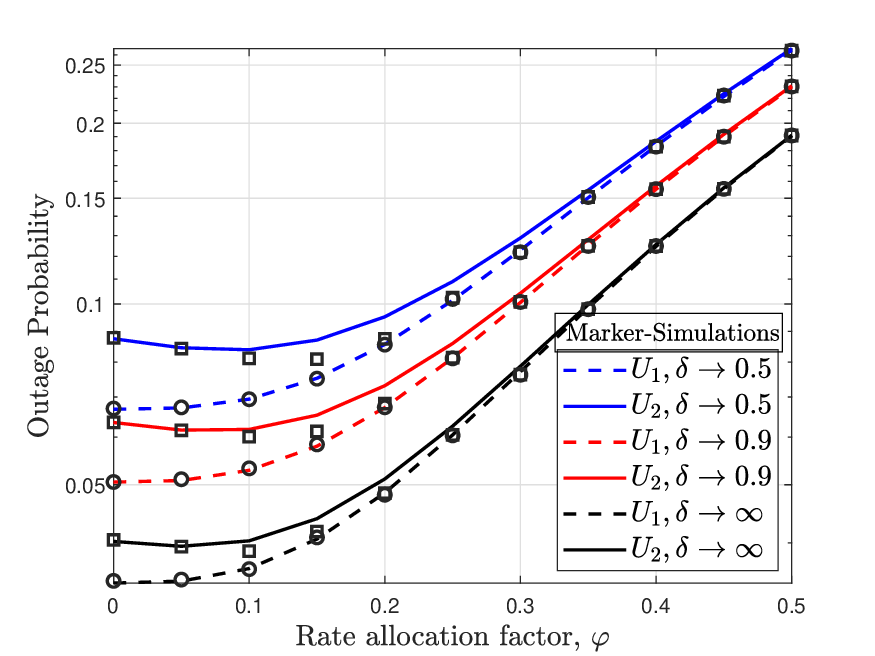}
        \caption{Outage Probability vs $\varphi$.}
        \label{fig:rate}
    \end{minipage}
    \begin{minipage}[b]{0.32\textwidth}
    \centering
    \includegraphics[width=\textwidth]{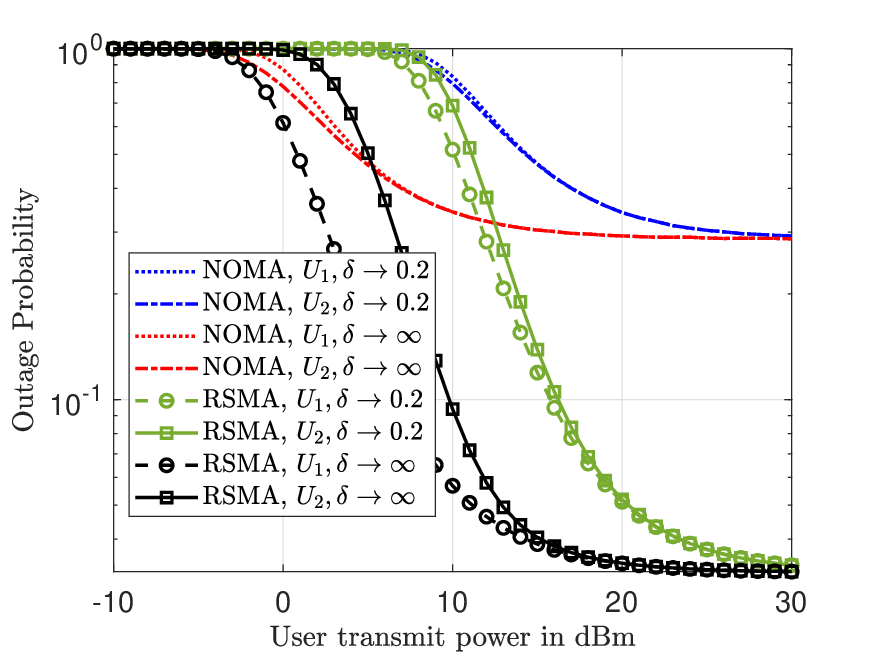}
    \caption{NOMA vs RSMA Comparison.}
    \label{fig:NOMa}
\end{minipage}\end{figure*}
    
\begin{table}[h!]	\renewcommand{\arraystretch}{1.0}
		\centering
		\caption{ Simulation Parameters}
		\label{t3}
			\resizebox{\columnwidth}{!}{\begin{tabular}{|l|l|l|l|l|l|}
			\hline
			Parameter         & Value         & Parameter & Value  & Parameter & Value  \\ \hline
			$m_1  $   &     $4 $   	&  $m_2$  &     $3$ & $\sigma^2$   &     $-100 $ dBm \\ \hline
			
			 $R_1$   &     $0.7$ &	   $ R_{2}$    &    $ 0.95 $ &  $\varphi $  &     $0.15$  	 \\ \hline
			
		 $\alpha_{2,1}$       & $0.27 $ 	&  $\alpha_{2,2} $   &     $0.73$    &      $U_1$-AP  &     $75$m  	 \\ \hline

    $U_2$-AP      & $70 $m 	&  $ $   &     $ $    &      $ $  &     $ $  	 \\ \hline
			
		
		\end{tabular}}
	\end{table}


Fig.~\ref{fig:RSMA_ICSI}  demonstrates the $U_i$ outage probability  with respect to their  $\rho_{i}$, considering the impact of channel estimation quality parameter $(\delta)$ with perfect SIC, whereas the Fig.~\ref{fig:RSMA_SIC} represents the  $P^{\text{Out}}_i$ and asymptotic outage behaviour $({\mathcal{P}}_{i}^{O})$ considering imperfect SIC factor $(\Xi_{i})$ with perfect CSIR. It is clearly evident from the figures that   $P^{\text{Out}}_1$ and $P^{\text{Out}}_2$ expressions derived in \eqref{1st_out} and \eqref{ou_22} are in close agreement with simulated results for all values of $\delta$ and $\Xi_i$. An increase in the value of $\delta$ from $0.2$ to $0.9$ can significantly reduce the outage probability for both users before approaching the floor.  This improvement is due to the enhanced accuracy in channel estimation, which provides a precise power control and interference management, thus improving the overall reliability of the communication link. 

In Fig.~\ref{fig:RSMA_SIC}, we observe how the impact of the imperfect SIC factor $\Xi_1$ is more significant on the outage probability of the second user, compared to when only $\Xi_2$ (the interference affecting both the users) is present. For instance, when $\Xi_1=0$ and $\Xi_2=0.1$, the outage probability for the first user, is approximately $8 \times 10^{-2}$, and for the second user, it is around $1.3 \times 10^{-2}$ at a transmit power of $10$ dBm. However, when the situation is reversed, with $\Xi_2=0$ and $\Xi_1=0.1$, the outage probability for the first user improves to about $5 \times 10^{-2}$, but the second user's performance degrades to approximately $1.8 \times 10^{-2}$. The reason for this trade-off between both the users is attributed towards a three-fold situation. First, when $\Xi_2=0$, $U_1$ benefits from an increase in its desired received power because there is no residual SIC present when decoding its signal. Second, since $U_2$ message is split into two parts, its total signal power is divided across two SINRs, with the first user’s signal affecting the final outcome of $U_2$ performance. Third, if $\Xi_1$ is non-zero, it directly effects the desired signal power of \eqref{U2_SINR_22}.  Furthermore, the impact of imperfect SIC significantly influences system performance; however, imperfect CSIR has a much more pronounced effect at low transmit SNR. This is because it affects the SINR across all decoding steps, as outlined in \eqref{U2_SINR_21}, \eqref{U1_SINR}, and \eqref{U2_SINR_22}, thereby influencing the overall performance of both users right from the start. In contrast, imperfect SIC impacts the  performance after the first message has been decoded. As the user transmit power increases, the effect of imperfect SIC becomes more dominant, while the impact of imperfect CSIR diminishes. Furthermore, the asymptotic outage floor behavior of $U_1$ and $U_2$ can also be observed as $\rho_1$ and $\rho_2$ increase without bound at the same rate. This behavior occurs because, at high SNR, the system's performance becomes limited by factors other than noise, such as interference and imperfections in SIC, leading to a non-zero probability of outage even as SNR increases.

Fig.~\ref{fig:throughput} depicts the throughput plotted against transmit SNR, considering different values of $\delta$ and $\Xi_i$. The derived expressions in \eqref{throughpu1_1st} and \eqref{throughpu1_2nd} closely match the simulations, as expected. Additionally, $\mathcal{T}_1$ and $\mathcal{T}_2$ begin to saturate as they approach their respective target rates for different values of $\delta$ and $\Xi_i$. Furthermore, it can be observed that the impact of imperfect SIC on $\mathcal{T}_1$ and $\mathcal{T}_2$ when $\delta \to 0.5$ and $\Xi_1 = \Xi_2 = 0.05$ is slightly overcompensated when $\delta \to 0.51$ for the same $\Xi_1 = \Xi_2$ values across the entire range of transmit SNR. This observation highlights the dominance of channel estimation quality parameter factor over the imperfect SIC factor in terms of user performance enhancement. Additionally, the throughput degradation for $U_1$ and $U_2$ at 5 dBm from $\mathcal{T}_1 \approx 5.94\times 10^{-1}$ to $5.79 \times 10^{-1}$ bits per second per Hertz (bps/Hz) and $\mathcal{T}_2 \approx 4.25 \times 10^{-1}$ to $3.22 \times 10^{-1}$ bpcu, respectively, when $\Xi_{1} = \Xi_{2} = 0$ changes to $\Xi_{1} = \Xi_{2} = 0.05$ highlights the importance of incorporating realistic assumptions in order to capture the practical system behavior while designing future wireless connectivity solutions.

Fig.~\ref{fig:rate} shows the outage behavior of $U_1$ and $U_2$ with respect to the rate allocation factor ($\varphi$) considering perfect SIC. It is interesting to observe that when $\varphi > 0.3$, $P^{\text{Out}}_1$ and $P^{\text{Out}}_2$ become extremely close to each other as $P^{\text{Out}}_1$ depends on the decoding of the $s_{2,1}$ stream. Additionally, if $\varphi = 1$, the $\gamma_{2,2}^{th}$ will effectively become zero. This observation, along with the trend in Fig.~\ref{fig:rate}, suggests that as $\varphi$ increases beyond a certain limit, which is between $0.1$ and $0.2$ in our case, the performance of both users degrades. The reason for this trend is due to the presence of $\varphi$ while designing $\gamma_{2,1}^{th}$ and $\gamma_{2,2}^{th}$. As $\varphi$ increases, the threshold designated for \eqref{U2_SINR_21} also increases, which in turn affects $P^{\text{Out}}_1$.

Fig.~\ref{fig:NOMa} compares the uplink outage probabilities for NOMA and RSMA, with target rates of 0.75 and 0.85, and $U_2$ being decoded first in NOMA transmission. The figure shows that NOMA reaches its outage floor much earlier than RSMA, underscoring its limited ability to effectively handle interference from other users. 
Additionally, the outage probabilities for $U_1$ and $U_2$ in NOMA are closely aligned, which is due to the absence of message splitting in NOMA, leading to a higher decoding threshold for $U_1$ compared to RSMA users.
RSMA consistently shows better performance for $U_1$ compared to $U_2$. However, in the NOMA scenario, $U_2$ exhibits slightly better performance than $U_1$. Finally, RSMA handles imperfect CSIR more effectively than NOMA, as evident when $\delta \to 0.2$, where RSMA achieves lower outage values than NOMA.

\section{Conclusion}
This work provides significant insights into the uplink RSMA technique by deriving analytical expressions for outage probability and throughput under both perfect and imperfect CSIR and SIC. We have also evaluated the asymptotic outage behavior, which complements our understanding of the system's performance.
Our results reveal that the imperfect CSIR has a more pronounced effect on user performance  than imperfect SIC at low SNRs. However, as transmit power increases, the impact of imperfect SIC increases while that of imperfect CSIR slowly fades away, highlighting the importance of accurate channel estimation and effective message cancellation in system design. Furthermore, we found that user performance is highly sensitive to the rate allocation factor.
Finally, our findings indicate that RSMA manages imperfect CSIR more effectively than NOMA and achieve much lower outage floors than NOMA across different SNR values. 
\appendices
 \section{}
 \begin{proof}
    In order to derive $U_1$'s outage probability, we first need to evaluate \eqref{U2_SINR_21} outage expression and then \eqref{U1_SINR}  which can be obtained as
\begin{align}\label{Proof_starts}
   P^{\text{Out}}_{1}= 
\underbrace{\mathrm{Pr}\left(\gamma_{2,1}<\gamma_{2,1}^{th}\right)}_{\text{$P^{\text{Out}}_{2,1}$}}  \text{and}  
\underbrace{\mathrm{Pr}\left(\gamma_{1}<\gamma_{1}^{th}\right)}_{\text{$P^{\text{Out}}_{11}$}}
\end{align}
We will first calculate for $P^{\text{Out}}_{2,1}$. By substituting \eqref{U2_SINR_21} and performing some algebraic manipulation, we obtain 
\begin{align}\label{Proof_U21}
    P^{\text{Out}}_{2,1}= \mathrm{Pr}\left(|\hat{h}_2|^2< \gamma^{th}_{2,1}\left(\frac{|\hat{h}_1|^2\rho_1+A_{2,1}}{\rho_2\left(\alpha_{2,1}-\gamma^{th}_{2,1}\alpha_{2,2} \right)}\right)\!\!\right),
\end{align}
where $A_{2,1} = \left(\rho_1\Omega_{h_{1e}}\!+\!\rho_2\Omega_{h_{2e}}\!\!+\!1\right)$.  Let $X=|\hat{h}_2|^2$\; and $Y=|\hat{h}_1|^2$, then \eqref{Proof_U21} can be expressed as  
\begin{align}\label{Proof_21_Step_1}
             P_{2,1}^{\text{Out}}= \int\limits^{\infty}_{0}f_{Y}(y) F_{X}\left(\frac{\gamma_{2,1}^{th}\left(y\rho_1+A_{2,1}\right)}{\rho_2\left(\alpha_{2,1}-\gamma^{th}_{2,1}\alpha_{2,2} \right)}\right)dy,
         \end{align}
Next, by substituting \eqref{CDF} in \eqref{Proof_21_Step_1}, we obtain 
\begin{align}\label{Proof_21_Step_22}
    P_{2,1}^{\text{Out}}\!\!=\!\! \int\limits^{\infty}_{0}\!\!f_{Y}(y)\frac{1}{\Gamma(m_2)}\gamma\left(\!\!m_2, \frac{m_2\gamma_{2,1}^{th}\left(y B_{2,1}+A_{2,1}\right)}{\hat{\Omega}_2\rho_2\left(\alpha_{2,1}-\gamma^{th}_{2,1}\alpha_{2,2}\right) }\right)dy,
\end{align}
Since, $m_2$ is a positive integer, therefore, this can be changed to Erlang distribution resulting in \eqref{Proof_21_Step_22} expansion as 
\begin{align}\label{Erlang}
    P_{2,1}^{\text{Out}}= &\int\limits^{\infty}_{0}f_{Y}(y)\big[1-\exp\left(-T_{2,1}\left(y B_{2,1}+A_{2,1}\right)\right)\nonumber\\
    &\times\sum\limits_{p=0}^{{m_2}-1}\frac{\left(T_{2,1}\left(y B_{2,1}+A_{2,1}\right)\right)^p}{p!}\big]dy,
\end{align}
where $T_{2,1}\!\!=\!\! \frac{{m_2} \gamma_{2,1}^{th}}{\hat{\Omega}_{2}\rho_2\left(\alpha_{2,1}-\gamma_{2,1}^{th}\alpha_{2,2}\right)}$, $B_{2,1}=\rho_1$. Next, substituting the PDF from \eqref{pdf} in \eqref{Erlang} and applying \cite[ $1.111$]{2015249} on ${\left(T_{2,1}\left(y B_{2,1}+A_{2,1}\right)\right)^p}$, we obtain
\begin{align}\label{Final_expression_U_21}
     P_{2,1}^{\text{Out}}&= 1-\sum\limits_{p=0}^{{m_2}-1}\sum\limits_{q=0}^{p}\binom{p}{q}\left(T_{2,1}B_{2,1}\right)^q  \left(T_{2,1}A_{2,1}\right)^{(p-q)}\nonumber\\
     &\times\frac{\exp\left(-T_{2,1}A_{2,1}\right) C_1^{m_1} }{(p!)\Gamma(m_2)}  \int\limits^{\infty}_{0} y^{v-1}\exp\left(-y\mu\right)dy
\end{align}  
        where $v=(m_1+q)$  and $\mu=(C_1+T_{2,1}B_{2,1})$. The resulting integral can be solved by applying \cite[$3.381.4$]{2015249}. Thus, we obtain \eqref{P_New_out21}. Next, we solve $P_{11}^{\text{Out}}$ by substituting \eqref{U1_SINR} in \eqref{Proof_starts} as 
        \begin{align}\label{Proof_U1_starts}
           P_{11}^{\text{Out}}\! =\!  \mathrm{Pr}\!\!\left(\!|\hat{h}_1|^2\!\!<\! \gamma^{th}_{1}\!\!\left(\!\!\frac{|\hat{h}_2|^2\rho_2\left(\!\alpha_{2,2}\!+\!\Xi_2\alpha_{2,1}\!\right)\!+\!A_1\!\!}{\rho_1}\!\right)\!\!\right),
        \end{align}
        where $A_{1} \!\!=\!\! \left(\rho_1\Omega_{h_{1e}}+\rho_2\Omega_{h_{2e}}\left(\alpha_{2,2}+\alpha_{2,1}\Xi_2\right)+1\right)$.
        Let $X=|\hat{h}_1|^2$\; and $Y=|\hat{h}_2|^2$, then \eqref{Proof_U1_starts} can be expressed as 
        \begin{align}\label{Proof_step11}
             P_{11}^{\text{Out}}= \int\limits^{\infty}_{0}f_{Y}(y) F_{X}\left(\frac{\gamma_{1}^{th}\left(y D_1+A_1\right)}{\rho_1}\right)dy,
         \end{align}
         where $D_1\! \!=\!\! \rho_2\left(\alpha_{2,2}+\Xi_2\alpha_{2,1}\right) $. We then solve \eqref{Proof_step11} by following a similar approach as of \eqref{Proof_21_Step_22} to \eqref{Final_expression_U_21}, and thus obtain \eqref{P_out_11}. 
 \end{proof}
 
  \section{}
 \begin{proof}
   In order to evaluate the  asymptotic outage probability for $U_1$, we first need to follow the same approach as outlined in \eqref{Proof_starts} and \eqref{Proof_U21}.  As $\rho_1=\rho_2\to\infty$, the term $A_{2,1}$ is reduced to 1.
By applying \textit{L'Hôpital's rule}, the resulting limit results in $\frac{\gamma_{2,1}^{th}|\hat{h}_1|^2 }{\alpha_{2,1} - \gamma_{2,1}^{\text{th}} \alpha_{2,2}}
$. Next, following a similar approach till \eqref{Erlang}, the resulting equation can be expressed as 
   \begin{align}\label{Erlang_asymptotic}
   \mathcal{P}^O_{2,1}\!\!= \!\!&\int\limits^{\infty}_{0}\!\!f_{Y}(y)\big[1\!-\!\exp\left(-\mathbb{T}_{2,1}y\right)\sum\limits_{p=0}^{{m_2}-1}\!\!\frac{\left(\mathbb{T}_{2,1
}y\right)^p}{p!}\big]dy,
\end{align}
where $\mathbb{T}_{2,1}\!\!=\!\! \frac{{m_2} \gamma_{2,1}^{th}}{\hat{\Omega}_{2}\left(\alpha_{2,1}-\gamma_{2,1}^{th}\alpha_{2,2}\right)}$. Next, by performing some mathematical simplifications, we obtain $\mathcal{P}^O_{2,1}$ given in \eqref{Asymp_11}. Similarly, we then solve for $\mathcal{P}^O_{11}$ and thus, obtain final expression given in \eqref{Asymp_11}.
   
 \end{proof}
 \section{}
 \begin{proof}
    For evaluating the  asymptotic outage probability of $U_2$  considering  perfect SIC, i.e, $\Xi_1=\Xi_2=0$, we can write the expression as 
    \begin{align}\label{Proof_U22_asymp}
   \mathcal{P}^{O}_{2,2}= \mathrm{Pr}\left(|\hat{h}_2|^2< \frac{\gamma^{th}_{2,2}A_{2,2}}{\rho_2\alpha_{2,2}}\right),
\end{align}
where $A_{2,2} = \left(\rho_1\Omega_{h_{1e}}\Xi_1 + \rho_2\Omega_{h_{2e}}\left(\alpha_{2,2}+\alpha_{2,1}\Xi_2\right)+1 \right)$ . As $\rho_1=\rho_2\to\infty$ and $\Xi_1=\Xi_2=0$, the term $A_{2,2}$ is reduced to $1$. Next,  we use $\gamma\left(c,z\right) \underset{z \to 0}{\approx} \frac{z^c}{c}$, thus, we obtain
    \begin{align}\label{Aym_proof}
       \mathcal{P}_{2,2}^O=\frac{1}{ m_2\;\Gamma(m_2)}\left(\frac{\gamma^{th}_{2,2}m_2}{\hat{\Omega}_2\rho_2\alpha_{2,2}}\right)^{m_2}.
    \end{align}
 \end{proof}


\end{document}